\newtheorem{theorem}{Theorem}
\newtheorem{statement}{Statement}
\begin{document}

\title{Retrodirective Large Antenna Energy Beamforming in Backscatter Multi-User Networks}

\author{Ioannis Krikidis,~\IEEEmembership{Senior Member,~IEEE}
\thanks{I. Krikidis is with the Department of Electrical and Computer Engineering, University of Cyprus, Nicosia 1678 (E-mail: {\sf krikidis@ucy.ac.cy}).}}

\maketitle

\begin{abstract}
In this letter, we study a new technique for energy beamforming (EB) in multi-user networks, which combines large antenna retrodirectivity at the transmitter side with signal backscattering at the energy receivers. The proposed technique has low complexity and achieves EB without any active operation at the receivers or complicated signal processing techniques at the transmitter. Since the average harvested energy depends on the backscattering coefficients, we investigate different reflection policies for various design objectives. The proposed policies are analyzed from a system level standpoint by taking into account spatial randomness.      
\end{abstract}
\vspace{-0.2cm}
\begin{keywords}
Wireless power transfer, retrodirectivity, backscattering, massive MIMO, stochastic geometry. 
\end{keywords}

\vspace{-0.5cm}
\section{Introduction}

\IEEEPARstart{W}{ireless} power transfer (WPT) through dedicated radio-frequency (RF) radiation has been introduced as a promising technology for energizing low-power devices \cite{CLE}. It is relevant for future mobile networks, where  the majority of billion of devices will be low-powered and autonomous.  To increase the range of WPT and boost the end-to-end power transfer efficiency, several techniques have been proposed in the literature that incorporate new circuit designs,  network architectures and recent advances in signal processing such as energy beamforming (EB) \cite{RUI0}. 

EB refers to systems with multiple antennas and enables focusing the transmitted energy to the direction of the receivers. The implementation of EB requires channel state information (CSI) at the energy transmitter (ET), which becomes a critical assumption for WPT systems, since the energy capabilities of the energy receivers (ERs) are limited. To tackle this problem, most of works consider a quantized feedback and/or exploit the channel reciprocity in association with different channel learning methods  \cite{RUI2, RUI3}.  In \cite{RUI1}, the authors introduce an alternative EB scheme that does not require channel estimation, by exploiting  the concept of large antenna retrodirectivity. However, its implementation requires signal transmission from the ERs, which could be costly for WPT scenarios. On the other hand, the work in \cite{YAN} studies EB in a backscatter communication system, where CSI is acquired by leveraging the backscatter signal that is reflected back from the ERs. A beneficial combination of backscatter communication and retrodirective beamforming for EB is introduced in \cite{HUA}; however, a theoretical analysis is not provided.

In this letter, we study a new EB technique for multi-user networks that combines large antenna retrodirectivity with signal backscattering. The proposed technique overcomes the limitations of the previous works in \cite{RUI1}, \cite{YAN}, and achieves EB without CSI estimation at the ET and without signal transmission at the ERs. We show that the achieved energy harvesting performance depends on the backscattering reflection coefficients. Different reflection policies that correspond to different complexities and energy harvesting performances are investigated. The proposed EB technique and the associated reflection policies are analyzed from a system level point of view by using stochastic geometry tools.

\vspace{-0.35cm}
\section{System model}\label{system_model}

We consider a single-cell consisting of an ET and multiple randomly deployed ERs, where the coverage area is modeled as a disc $\mathcal{D}$ with radius $\rho$ and an exclusion zone of radius $\xi>1$. The exclusion zone captures potential safety issues i.e., ERs are prohibited to be close to the ET \cite{XIA}. The ET employs a large number of antennas, denoted by $M$, while the ERs are equipped with single antennas.  We assume that the ERs are located according to a homogeneous Poisson point process (PPP) $\Phi=\{x_i\}$ with a density $\lambda$ that ensures $M\gg K$, where $K$ is the number of ERs in $\mathcal{D}$ (i.e., Poisson distribution with mean $\lambda \pi (\rho^2 -\xi^2)$ \cite{HAN2}).  

The ET is connected to the power grid and transmits with a fixed power $P_t$; its antenna array has (digital) retrodirectivity properties and thus it can re-transmit a signal back along the spatial direction of the incoming signal without any a-priori knowledge of its point of origin \cite{MAS}. On the other hand, the ERs do not have any local power supply and they power their functionalities through WPT by using the signal of the ET. Specifically, the ERs operate in a backscattering mode, where part of the received signal is reflected back by adapting the level of antenna impedance mismatch; let $0\leq \beta_i \leq 1$ be the reflection coefficient for the $i$-th ER. 

The large-scale path loss is assumed to be $d_i^{-\alpha}$, where $d_i\geq \xi$ denotes the Euclidean distance between the $i$-th ER and the ET, and $\alpha>2$ is the path-loss exponent; the exclusion zone ensures a bounded path loss region with radius $\xi$. We assume a Rayleigh block fading \cite{RUI1,XIA}, where the channel between the ET and the $i$-th ER is an uncorrelated Gaussian vector with elements having zero mean and unit variance i.e., $\pmb{f}_i=[f_{i,1},\ldots, f_{i,M} ]^T\sim \mathcal{CN}(\pmb{0},\pmb{I}_M)$.  We assume channel reciprocity and thus the channels for uplink and downlink are equivalent and constant during one block, but vary independently from block to block. It is worth noting that  the proposed scheme becomes less interesting  
for scenarios with strong line-of-sight components (e.g., for Rice fading channel, the omnidirectional EB approaches the performance of the perfect EB  as the Rice factor increases).

\vspace{-0.3cm}
\section{Retrodirective EB with backscattering}

The proposed EB technique combines a retrodirective beamforming at the ET with a passive backscattering process at the ERs. The EB technique consists of two phases, as described below.

\vspace{-0.3cm}
\subsection{Backscattering phase} 

The first phase enforces the ERs to notify their spatial direction to the ET by using their backscattering capabilities. More specifically, the ET generates and broadcasts an unmodulated single-tone waveform $x(t)=\sqrt{2P_t} \cos(2\pi f_c t)$ with $0<t<\tau$, where $f_c$ is the carrier frequency and $\tau$ is the duration of the waveform\footnote{The proposed scheme requires the orthogonality between single-tone waveform transmission and retrodirectivity at the ET. An appropriate design of the parameters $\tau$, $\xi$ as well as a time delay between reception/reflection at the ERs ensures that the duration of the waveform is shorter than the round-trip delays.}; the power is symmetrically divided to $M$ antennas.  Each ER operates in the backscattering mode and reflects back a part of the received signal without further processing. The received signal (complex baseband representation) at the ET is given by
\vspace{-0.2cm}
\begin{align}
\pmb{y}(t)&=\sum_{x_k \in \Phi}\sqrt{\beta_k (P_t /M)d_k^{-2\alpha}}g_k \pmb{f}_k+\pmb{n}(t),
\end{align}
where $g_k=\sum_{m=1}^M f_{k,m}$ denotes the equivalent channel for the downlink (sum of $M$ complex Gaussian random variables with unit variance), $\pmb{f}_k$ is the uplink channel due to the reciprocity, and $\pmb{n}(t)$ is the additive white Gaussian noise (AWGN) vector with $\pmb{n}(t)\sim \mathcal{CN}(\pmb{0},\sigma^2 \pmb{I}_M)$. Note that the path-loss exponent is doubled as a result of the dyadic backscatter channel.

\vspace{-0.3cm}
\subsection{EB phase} 

In the second phase, the ET exploits large antenna retrodirectivity and uplink/downlink channel reciprocity and effectively forms coherent beams back to the ERs. More specifically, the ET performs matching filtering on $\pmb{y}(t)$ to detect the phase of the received signal; then it transmits an energy signal with a power $P_t$ by employing phase conjugation. The energy transmit signal (baseband) can be written as $\pmb{x}_e=(\pmb{w}+\tilde{\pmb{n}})^H/||\pmb{w}+\tilde{\pmb{n}}||$, where $\pmb{w}=\sum_{x_k \in \Phi}\sqrt{\beta_k (P_t /M)d_k^{-2\alpha}}g_k^* \pmb{f}_k^H$ and $\tilde{\pmb{n}}\sim \mathcal{CN}(\pmb{0},(\sigma^2/\tau) \pmb{I}_M)$ denotes the noise at the output of the matched filter. Due to the channel reciprocity, the received signal at the $i$-th ER can be expressed as $y_i= \sqrt{P_t d_i^{-\alpha}}\pmb{x}_e \pmb{f}_i+z$, where $z$ is the AWGN component.  By using the energy conversion law and by ignoring energy harvesting from AWGN, the harvested energy at the $i$-th ER becomes 
\begin{align}
&Q(\pmb{\beta},d_i)= \zeta |y_i|^2 =\underbrace{\zeta P_t d_i^{-\alpha}}_{Q_{\text{OM}}(d_i)}+\underbrace{\frac{\zeta P_t M d_i^{-3\alpha}\beta_i|g_i|^2}{\sum_{x_k \in \Phi} d_k^{-2\alpha} \beta_k|g_k|^2+\frac{M \sigma^2}{P_t \tau}}}_{Q_{\text{RE}}^{\Phi}(\pmb{\beta}, d_i)}, 
\label{harvesting}
\end{align}
where $\zeta$ denotes the RF-to-DC energy conversion efficiency\footnote{A linear power channel is sufficient to demonstrate the proposed EB technique \cite{RUI2,RUI3}; wireless power nonlinearities are beyond the scope of this letter.}. 
\begin{proof}
The proof follows similar arguments with \cite[Lemma 3.1]{RUI1} and is based on the asymptotic massive multiple-input multiple-output (MIMO) expressions \cite{LIM} i.e., $\frac{1}{M}||\pmb{f}_i||^4\rightarrow M+1$, $\frac{1}{M}||\pmb{f}_i||^2\rightarrow 1$, $\frac{1}{M}\pmb{f}_k^H{\pmb{f}_i}\rightarrow 0$ (with $k\neq i$), $\frac{1}{M}\pmb{f}_k^H{\tilde{\pmb{n}}}\rightarrow 0$, $\frac{1}{M}||\pmb{f}_k^H\tilde{\pmb{n}}||^2\rightarrow \frac{\sigma^2}{\tau}$, and $\frac{1}{M}||\tilde{\pmb{n}}||^2\rightarrow \frac{\sigma^2}{\tau}$.
\end{proof}
The harvested energy consists of two main components i) an omnidirectional component $Q_{\text{OM}}$ which only depends on the location of the ER, and b) a retrodirective component $Q_{\text{RE}}^{\Phi}$ that depends on the reflection coefficients $\beta_i$ of the ERs.

\vspace{-0.4cm}
\section{Average harvested energy}
In this section, we study the impact of reflection coefficients on the achieved performance and we investigate different reflection strategies.  

\vspace{-0.4cm}
\subsection{Distance-inversion backscattering (DIB)}

In this strategy, we enforce the average harvested energy associated with the retrodirective component to be equal for each ER. To satisfy this fairness requirement, we assume that the reflection coefficient is a function of the location of each ER i.e., $\beta_i=(d_i/\rho)^{2\alpha}$; this assumption ensures that the reflection coefficient increases with the distance and takes the maximum value (full reflection) at the edge of the cell. In this case, the expression in \eqref{harvesting} can be simplified as
\begin{align}
&Q(\pmb{p},d_i)\approx \zeta P_t d_i^{-\alpha}\!\! \left [\!1\!+\!\frac{M|g_i |^2}{|g_i|^2+\sum_{x_k \in \Phi, i\neq k} |g_k|^2} \right]\;\text{for}\;\sigma^2\!\rightarrow 0,
\end{align}
where $\pmb{p}=[(d_1/\rho)^{2\alpha},\ldots,(d_K/\rho)^{2\alpha}]$ and the average harvested energy is given by 
\begin{subequations}
\begin{align}
&\mathcal{Q}_{\text{DIB}}=\zeta P_t \mathbb{E}_{\Phi}[d_i^{-\alpha}]\left( 1+M\mathbb{E} \left[ \frac{|g_i |^2}{|g_i|^2+\sum_{x_k \in \Phi, i\neq k} |g_k|^2}   \right]\right) \nonumber \\
&=\zeta P_t \int_{0}^{2\pi}\int_{\xi}^{\rho}x^{1-a} f_d(x) d\theta dx \bigg(1+M\mathbb{E}[\mathcal{Z}]\bigg)   \label{q1} \\
&=   \underbrace{\frac{2\zeta P_t (\rho^{2-\alpha}-\xi^{2-\alpha}) }{(2-\alpha)(\rho^{2}-\xi^2)}}_{\Lambda(\xi,\rho)} \left(1+\frac{M}{\lambda \pi (\rho^2-\xi^2)}\right), \label{q2} 
\end{align}
\end{subequations}
where $f_d(x)=1/\pi(\rho^2-\xi^2)$ denotes the probability density function (PDF) of each point in the disc $\mathcal{D}$, and \eqref{q2} is based on the expected value of Beta distribution with shape parameters $M$ and $(\mathbb{E}[K]-1)M$, respectively, i.e., $\mathcal{Z}\sim Beta(M,[\lambda \pi (\rho^2-\xi^2)-1] M)$.

\vspace{-0.3cm}
\subsection{Full backscattering (FB)}
The FB reflection strategy refers to systems without further intelligence, where the ERs fully reflect back their received signals i.e., $\beta_i=1$ ($\forall\; x_i\in \Phi$). This scheme does not require any coordination between the ERs and therefore provides implementation simplicity. For the average harvested energy, we state the following theorem. 

\begin{theorem}
The average harvested energy achieved by the FB scheme, is given by
\begin{align}
\mathcal{Q}_{\text{FB}}&= \Lambda(\xi,\rho)+Q_{\text{FB}}(\xi,\lambda).
\end{align}
\end{theorem}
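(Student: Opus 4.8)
The plan is to apply linearity of expectation to $\mathcal{Q}_{\text{FB}}=\mathbb{E}[Q(\pmb{\beta},d_i)]$ with $\beta_k=1$ for all $k$, splitting the average into the two components identified in \eqref{harvesting}, and then to handle the spatial randomness of the retrodirective part with stochastic-geometry tools. The omnidirectional term $\mathbb{E}[Q_{\text{OM}}(d_i)]=\zeta P_t\,\mathbb{E}_{\Phi}[d_i^{-\alpha}]$ does not depend on the reflection coefficients, so it is evaluated exactly as in \eqref{q1}--\eqref{q2} and contributes $\Lambda(\xi,\rho)$. All the work is therefore to show that the remaining term $\mathbb{E}[Q_{\text{RE}}^{\Phi}]$ (evaluated at $\beta_k=1$) equals $Q_{\text{FB}}(\xi,\lambda)$.

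For the retrodirective term, following the DIB treatment I would take $\sigma^2\to0$ (a nonzero noise floor only adds an elementary exponential factor later) and isolate the typical receiver's own contribution in the denominator, writing
\begin{align}
\mathbb{E}[Q_{\text{RE}}^{\Phi}]=\zeta P_t M\,\mathbb{E}\!\left[\frac{d_i^{-\alpha}|g_i|^2}{|g_i|^2+d_i^{2\alpha}I}\right],\quad I:=\sum_{x_k\in\Phi,\,k\neq i}d_k^{-2\alpha}|g_k|^2. \nonumber
\end{align}
The crucial difference from DIB is that here the distances do not cancel, so $d_i^{-\alpha}$ and the interference-limited ratio are statistically coupled and the expectation cannot be factored into a path-loss mean times a Beta mean. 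To decouple the ratio I would condition on $d_i$ and $|g_i|^2$, invoke Slivnyak's theorem so that $I$ is the shot noise of a PPP of density $\lambda$ on the annulus $\xi\le r\le\rho$ (removing the typical point leaves the law of the rest unchanged), and use $1/a=\int_0^\infty e^{-at}\,dt$ to rewrite the conditional mean of $1/(|g_i|^2+d_i^{2\alpha}I)$ as $\int_0^\infty e^{-|g_i|^2 t}\,\mathcal{L}_I(d_i^{2\alpha}t)\,dt$, with $\mathcal{L}_I$ the Laplace transform of $I$.

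The core computation is then the Laplace transform of the shot noise via the PPP probability generating functional,
\begin{align}
\mathcal{L}_I(s)=\exp\!\left(-2\pi\lambda\int_\xi^\rho\Big(1-\mathbb{E}\big[e^{-s r^{-2\alpha}|g|^2}\big]\Big)\,r\,dr\right), \nonumber
\end{align}
after which I would perform the remaining averages over $|g_i|^2$ and over $d_i$ (with density $f_d$) to collapse everything into the closed form $Q_{\text{FB}}(\xi,\lambda)$. The main obstacle I anticipate is this nested radial integral: the finite limits $\xi$ and $\rho$ keep the shot noise integrable but make the inner integral in $\mathcal{L}_I$, and hence the subsequent integrals over $t$, $|g_i|^2$ and $d_i$, evaluate to incomplete-gamma / hypergeometric-type expressions rather than elementary ones. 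Carrying the exact channel-gain law of $|g_k|^2$ through these integrals is the most delicate bookkeeping, and a large-$M$ channel-hardening or mean-ratio simplification may be needed to reduce $Q_{\text{FB}}(\xi,\lambda)$ to a compact expression.
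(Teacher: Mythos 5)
Your proposal is correct, and it rests on the same two pillars as the paper's Appendix derivation --- conditioning on the typical ER's distance and evaluating the shot-noise interference $I=\sum_{x_k\in\Phi,k\neq i}d_k^{-2\alpha}|g_k|^2$ through the PPP probability generating functional with exponentially distributed gains $|g_k|^2$ --- but you extract the mean by a genuinely different mechanism. The paper never averages the ratio directly: it first derives the conditional CCDF $F(x|d)$ in \eqref{pgfl}, exploiting the fact that $|g_i|^2$ is exponential with rate $1/M$ so that the tail event $\{|g_i|^2>Y(x)(I+\tfrac{M\sigma^2}{\zeta P_t\tau})\}$ has probability equal to a Laplace transform of $I$ evaluated at $Y(x)/M$; it then recovers the conditional mean by integrating $F(x|d)$ over the support $[\zeta P_t d^{-\alpha},(M+1)\zeta P_t d^{-\alpha}]$ in \eqref{Q_bar}, and finally averages over $d$. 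Your route --- condition on $d_i$ and $|g_i|^2$, apply $1/a=\int_0^\infty e^{-at}dt$, compute $\mathcal{L}_I$ via the PGFL (your formula matches the paper's inner integral once $\mathbb{E}[e^{-sr^{-2\alpha}|g|^2}]=1/(1+sMr^{-2\alpha})$ is inserted), then undo the conditioning --- is equally valid and yields an equivalent nested integral; your handling of noise as an extra exponential factor mirrors the paper's $\exp(-U(x))$ term, so dropping it temporarily costs nothing. What the paper's CCDF detour buys is that $F(x|d)$ and $\bar{Q}_{\text{RE}}^{\Phi}(\pmb{1},d)$ are standalone objects reused later (in the DBB design \eqref{deltaopt} and the PBB optimization), and exponentiality collapses the tail into a Laplace transform without the extra $t$-integral your identity introduces; what your route buys is a one-pass mean computation that never needs to identify the support of the conditional energy. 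One final correction: your closing worry about needing a large-$M$ or channel-hardening simplification to reach a ``compact'' $Q_{\text{FB}}(\xi,\lambda)$ is unfounded --- the theorem is a decomposition result, and the paper simply \emph{defines} $Q_{\text{FB}}(\xi,\lambda)$ as the resulting double integral $\frac{2}{\rho^2-\xi^2}\int_\xi^\rho\int F(x|y)\,y\,dx\,dy$, so your nested integral expression would already be the answer.
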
 
\begin{proof}
The first term is associated with the omnidirectional component of the harvested energy and is similar to the first term of the DIB scheme (see \eqref{q2}), $Q_{\text{FB}}(\xi,\lambda)$ refers to the retrodirective component and is given in an integral form in the Appendix. 
\end{proof}
For the special cases of highly dense/sparse networks with $\sigma^2\rightarrow 0$, we introduce the following statement. 
\begin{statement}
For $\sigma^2\rightarrow 0$, the average harvested energy for highly dense ($\lambda\rightarrow \infty$) and sparse networks ($\lambda\rightarrow 0$) converges to the one achieved by the non-backscattering ($\beta_i=0$) and beamforming (perfect CSI) schemes, respectively, i.e., 
\begin{align}
&\mathcal{Q}_{\text{FB}}\rightarrow \Lambda(\xi,\rho)\;\;\;\;\;\;\text{for}\;\; \lambda\rightarrow \infty,\;\; \text{[Non-backscattering]},
\\
&\mathcal{Q}_{\text{FB}}\rightarrow M\Lambda(\xi,\rho)\;\;\text{for}\;\; \lambda\rightarrow 0,\;\; \text{[Beamforming]}. 
\end{align} 
\end{statement}

\vspace{-0.8cm}
\subsection{Binary backscattering schemes}

In this strategy, the ERs operate either in a full-backscattering mode or in a non-backscattering mode i.e., $\beta_i\in \{0,1\}$. Through this binary reflection mechanism, we can control the harvested energy related to the retrodirective component and provide design flexibility. The selection of the operation mode is driven by two criteria, which are given in the following discussion.       

\subsubsection{Distance-based binary backscattering (DBB)}
The first criterion takes into account the spatial location of the ERs; ERs that are located at a distance higher than $\Delta$ from the ET, fully reflect the received signal ($\beta_i=1$), otherwise remain inactive ($\beta_i=0$). In this case, the average harvested energy can be written as
\begin{align}
\mathcal{Q}_{\text{DBB}}=\epsilon \Lambda(\xi,\Delta)+(1-\epsilon)\big[\Lambda(\Delta,\rho)+Q_{\text{FB}}(\Delta,\lambda)\big],
\end{align}
where the first and second terms refer to the location range $(\xi, \Delta)$ and $(\Delta, \rho)$, respectively, and $\epsilon=(\xi^2-\Delta^2)/(\rho^2-\xi^2)$. The design parameter $\Delta$ can be used as a means to provide an energy harvesting balance among the ERs. More specifically, the ERs that are close to the ET ($d_i\leq \Delta$) harvest sufficient energy from the omnidirectional component and facilitate the energy harvesting of the distant ERs by keeping silent. We can design $\Delta$ based on the max-min fairness that maximizes the minimum average harvested energy and is given as follows
\begin{align}
&\Delta^*\!=\!\arg_{\Delta \in (\xi,\rho)}\!\max \min\! \left\{\!Q_{\text{OM}}(\Delta), Q_{\text{OM}}(\rho)+\bar{Q}_{\text{RE}}^{\Phi(\Delta)}(\pmb{1},\rho) \!\right\} \nonumber \\
&\!=\!\left\{\Delta \!\left|\!\begin{array}{ll} Q_{\text{OM}}(\xi)=Q_{\text{OM}}(\rho)+\bar{Q}_{\text{RE}}^{\Phi(\Delta)}(\pmb{1},\rho)\;\;\text{If}\;\mathcal{T}\;\text{is true},\\
Q_{\text{OM}}(\Delta)=Q_{\text{OM}}(\rho)+\bar{Q}_{\text{RE}}^{\Phi(\Delta)}(\pmb{1},\rho)\;\text{elsewhere},  \end{array} \right. \right. \label{deltaopt}
\end{align}
where $\mathcal{T}=\{Q_{\text{OM}}(\xi)\leq Q_{\text{OM}}(\rho)+\bar{Q}_{\text{RE}}^{\Phi(\rho)}(\pmb{1},\rho) \}$, and $\Phi(\Delta)=\{x_i \in \Phi: ||x_i||\geq \Delta \}$. The above expression takes into account the active/inactive ERs with the worst energy harvesting performance and maximizes the minimum between them; since we have closed form expressions for the harvesting performance for both cases (\eqref{harvesting}, \eqref{Q_bar}), the optimal value $\Delta^*$ can be found by numerically solving the equation in \eqref{deltaopt}. 

\subsubsection{Probabilistic binary backscattering (PBB)}
Similar to the DBB technique, the ERs either fully reflect the received signal or remain inactive. In this case, the mode selection is performed in a probabilistic way and thus the ERs with a probability $p$ operate in backscattering mode and with a probability $1-p$ remain silent. The average harvested energy can be expressed as 
\begin{align}
\mathcal{Q}_{\text{PBB}}(p)= \Lambda(\xi,\rho)+p Q_{\text{FB}}(\xi,\lambda p).
\end{align}
Equivalently to the DBB scheme, $p$  can be designed to boost the energy harvesting performance of the worst case (i.e., ERs located in the edge) such as
\vspace{-0.2cm}
\begin{align}
p^*=\arg_{p \in (0,1)} \max \big\{\bar{Q}_{\text{RE}}^{\Phi'}(\pmb{1},\rho) \big\},
\end{align}
where $\phi'$ denotes a PPP with density $p\lambda$;  due to the complexity of the expression, $p^*$ does not have closed-form solution and can be evaluated numerically.  

\vspace{-0.5cm}
\subsection{Harvesting-target backscattering (HTB)}

In the HTB scheme, we assume that each ER aims to achieve a minimum harvested energy from the retrodirective component of the received signal. Let $\Gamma_i \geq 0$ denote the energy harvesting target for the $i$-th ER. In this case, the problem resembles to the conventional power control problem for time varying channels with user-specific signal-to-interference-plus-noise ratio  requirements, which admits on-line distributed implementation (Foschini-Miljanic algorithm \cite{FOS}). Specifically, the reflection coefficient $\beta_i$ for the $i$-th ER is given by the following iterative algorithm
\vspace{-0.1cm}
\begin{align}
\beta_i (l+1)=\min\left(1, \frac{\Gamma_i}{Q_{\text{RE}}^{\Phi}(\pmb{\beta}(l), d_i)}\beta_i(l) \right).\label{iter}
\end{align}

The convergence properties of the iterative scheme have been studied by Yates \cite{YAT}; it is worth noting that the channel gains between the ET and the ERs remain constant during the iterations (e.g., stationary users with slowly-varying channels). For the case where the problem is feasible and thus all ERs satisfy the minimum harvested energy constraint with $\beta_i\leq 1$, the backscattering coefficients are given in closed form i.e., $\pmb{\beta}^*=(\pmb{I}_K-\pmb{F})^{-1}\pmb{u}$, where $\pmb{u}=\left[\frac{\Gamma_1 \tilde{n}_1}{d_1^{-3\alpha}|g_1|^2},\ldots, \frac{\Gamma_K \tilde{n}_K}{d_K^{-3\alpha}|g_K|^2} \right]$,  and $\pmb{F}$ is a $K\times K$ matrix with $F_{i,j}=\frac{\Gamma_i |g_j|^2 d_{j}^{-2\alpha}}{|g_i|^2 d_{i}^{-3\alpha}}$ and $i,j=1,\ldots, K$ \cite{FOS}.

\begin{figure*}[t]
  \begin{minipage}{0.33\linewidth}
    \includegraphics[width=\linewidth]{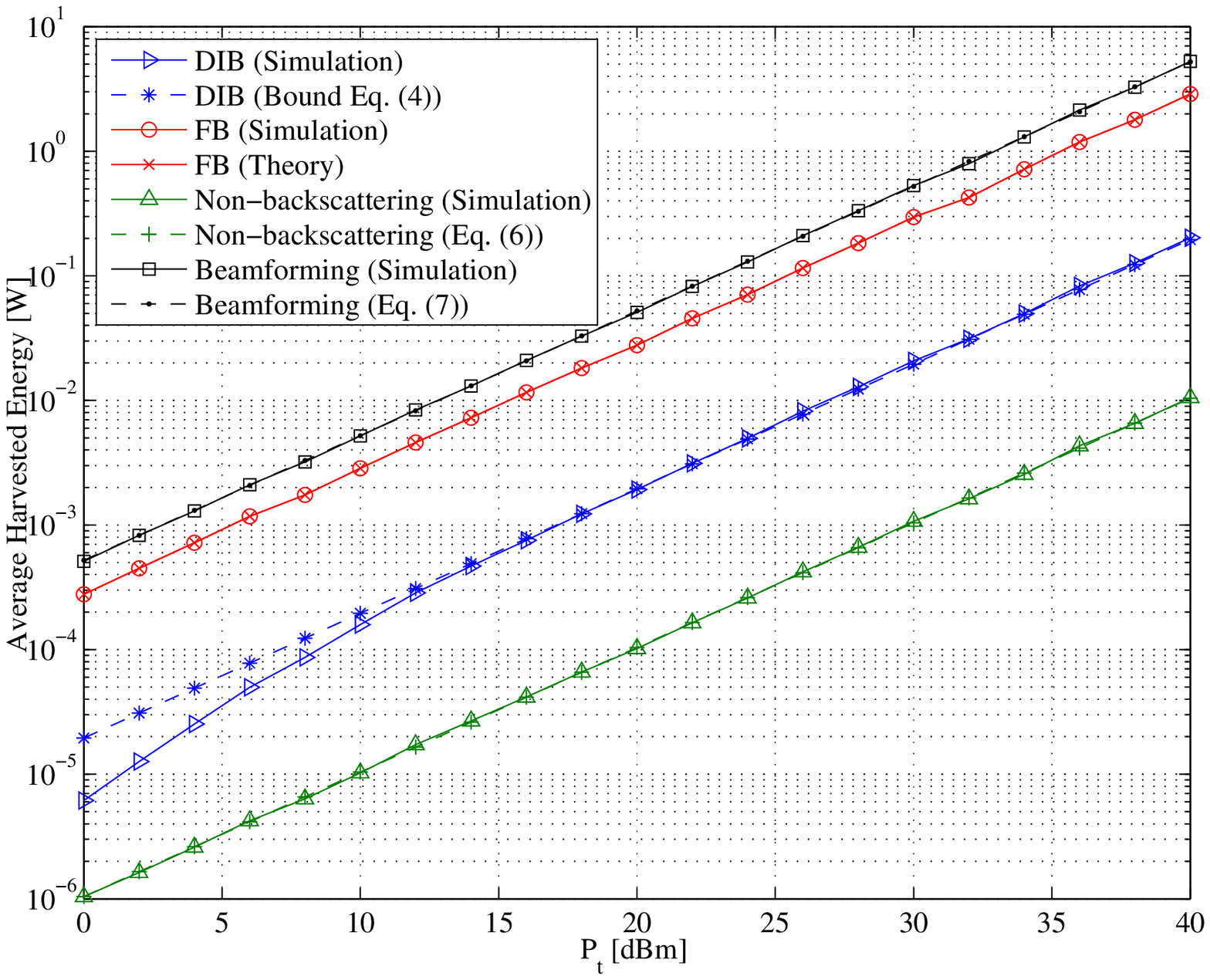}
    \vspace{-0.5cm}
    \caption{Average harvested energy versus the transmit power (DIB, FB, Non-backscattering, Beamforming.).}\label{fig1}
  \end{minipage}
  \begin{minipage}{0.33\linewidth}
    \includegraphics[width=\linewidth]{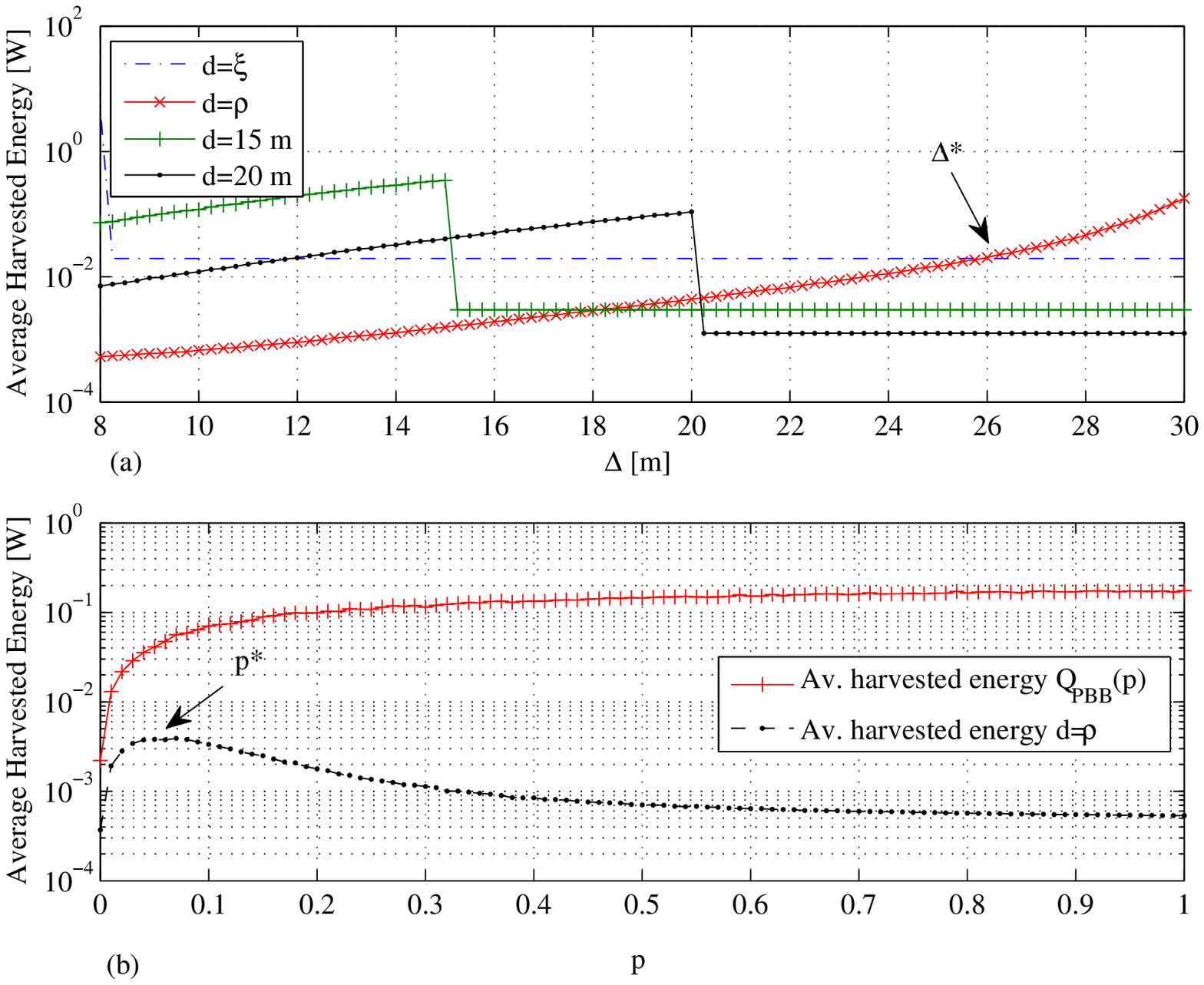}
    \vspace{-0.5cm}
    \caption{(a) DBB scheme- average harvested energy versus $\Delta$, (b) PBB scheme- average harvested energy versus $p$; $\xi=8$ m, $P_t=40$ dBm}\label{fig2}
  \end{minipage}
  \begin{minipage}{0.33\linewidth}
    \includegraphics[width=\linewidth]{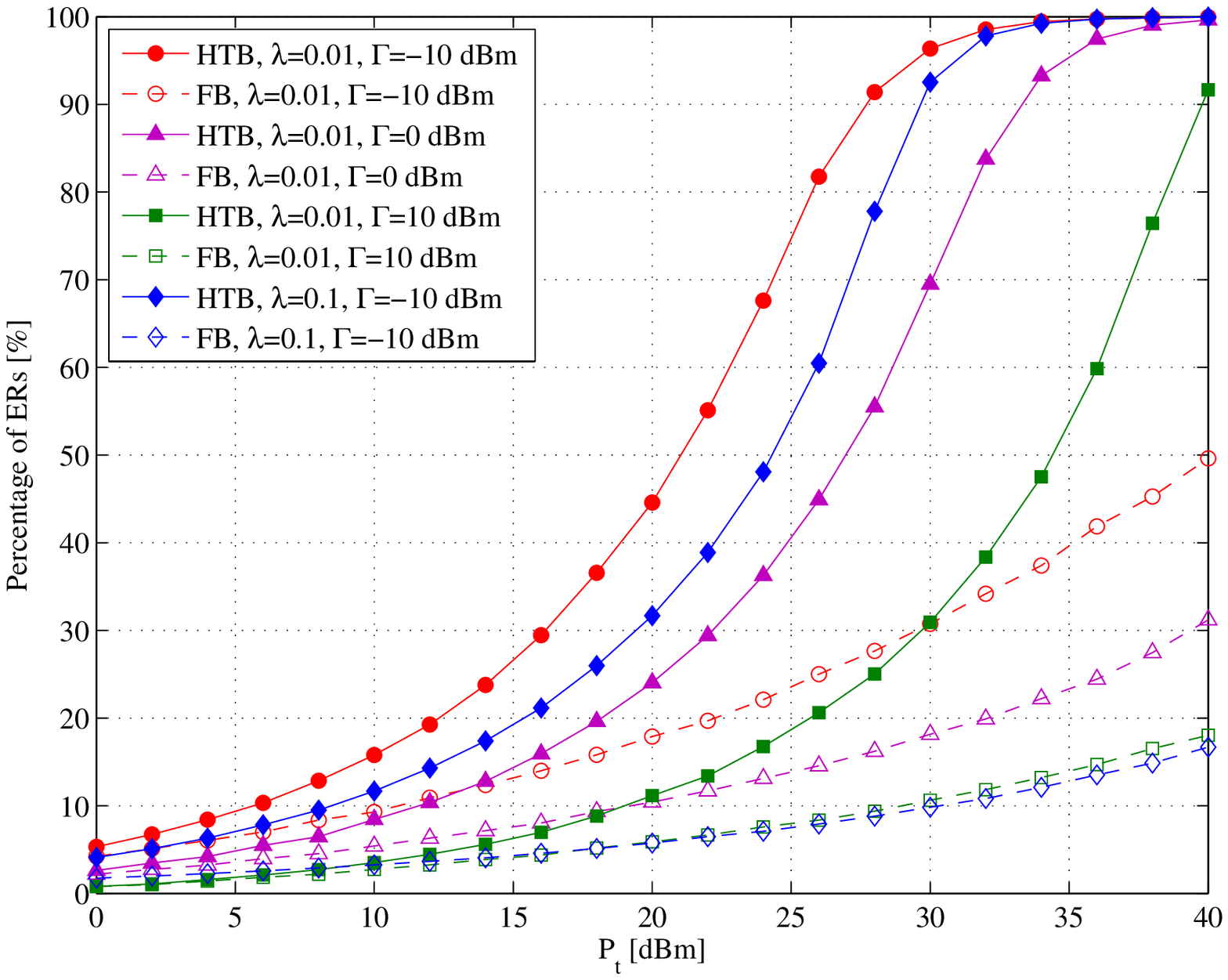}
    \vspace{-0.5cm}
    \caption{Percentage of ERs that satisfy the energy harvesting constraint for the HTB scheme.}\label{fig3}
  \end{minipage}
\end{figure*}

\vspace{-0.3cm}
\section{Numerical results}

The simulation setup follows the description in Sec. \ref{system_model} with $M=500$, $\lambda=0.01$, $f_c=900$ MHz, $\alpha=3$, $\sigma^2=-150$ dBm, $\tau=10^{-8}$ sec, $\xi=2$ m, $\rho=30$ m, and $\zeta=1$ (unless otherwise stated). In Fig. \ref{fig1}, we present the average harvested energy versus the transmit power $P_t$ for the proposed reflection policies. The performance of perfect beamforming (full CSI at the ET) and non-backscattering (omnidirectional transmission) are given as benchmarks. As it can be seen, the FB scheme significantly outperforms the non-backscattering scheme and provides a performance that is close to the perfect beamforming case. On the other hand, the DIB scheme outperforms non-backscattering, while it provides fairness among ERs (in comparison to FB).     

Fig. \ref{fig2}(a) plots the average harvested energy of the DBB scheme for different ER locations versus the distance threshold $\Delta$. As it can be seen, the parameter $\Delta$ provides an energy harvesting balance among ERs located in different distances from the ET. According to the expression in \eqref{deltaopt}, the value $\Delta^*$ that ensures $\max-\min$ fairness corresponds to the cross-over point between the average harvested energy of the ERs located in distance $\xi$ and $\rho$, respectively. In Fig. \ref{fig2}(b), we deal with the performance of the PBB scheme. By controlling the probability $p$, we achieve different energy harvesting balances among the ERs; the optimal value $p^*$ is selected to maximize the energy harvesting at the edge of the cell.   

Finally, Fig. \ref{fig3} deals with the HTB scheme and plots the percentage of the ERs that can
achieve a common harvesting target $\Gamma_i=\Gamma$ from the retrodirective component versus the transmit power $P_t$. The HTB scheme is based on the distributed implementation (see \eqref{iter}) with (maximum) $100$ iterations; the FB scheme is used as a benchmark. We can observe that as the harvesting threshold and the density of users increase, the percentage of the ERs that satisfy the energy harvesting constraint decreases. In addition, the HTB scheme  significantly outperforms the FB scheme and the performance gain becomes more important as the transmit power increases. 

\vspace{-0.4cm}
\section{Conclusion}
This letter has dealt with a new EB scheme for WPT multi-user networks, which jointly combines large antenna retrodirectivity with passive signal backscattering. The proposed technique does not require CSI or active signal transmission at the ERs, and the energy harvesting performance depends on the reflection coefficients. Several reflection strategies have been investigated and analyzed by taking into account location randomness. The proposed EB technique has lower power consumption than conventional counterparts and is a promising solution for future machine-type communications. An interesting extension of this work is to investigate the proposed scheme for more complex networks with multiple ETs.        

\vspace{-0.4cm}
\appendix

The complementary cumulative distribution function (CCDF) of the harvested energy (associated with the retrodirective component of the received signal) for a given path-loss value $d$ is written as
\vspace{-0.15cm}
\begin{align}
&F(x|d)=F(x|d_i=d)=\mathbb{P}(Q_{\text{RE}}^{\Phi}(\pmb{1},d)>x) \nonumber \\
&=\mathbb{P}\left\{|g_i|^2> Y(x)\left( \sum_{x_k \in \Phi, k\neq i} d_k^{-2\alpha}|g_k|^2+ \frac{M\sigma^2}{\zeta P_t\tau}\right)  \right\} \nonumber \\
&=\exp\left( -U(x) \right)\mathbb{E}_{\Phi,g} \exp \left(-\frac{Y(x)}{M}\sum_{x_k \in \Phi, k\neq i} d_k^{-2\alpha}|g_k|^2 \right) \nonumber \\
&=\!\exp(-U(x))\exp\left(\!\!-\lambda\! \int_{\mathcal{D} }\!\! \left[\!1\!-\!\mathbb{E}_{g}\!\exp\!\left(-\frac{Y(x) |g|^2 y^{-2\alpha}}{M} \right)\! \right] \!y dy \!\!\right) \nonumber \\
&=\exp(-U(x)) \exp\left(-2\pi \lambda \int_{\mathcal{\xi}}^{\rho} \frac{Y(x) y}{Y(x)+y^{2\alpha}}  dy \right), \label{pgfl}
\end{align}
where $Y(x)=\frac{\frac{x}{\zeta P_td^{-\alpha}}-1}{(M+1-\frac{x}{\zeta P_td^{-\alpha}})d^{-2\alpha}}$, $U(x)=\frac{Y(x)\sigma^2}{\zeta P_t\tau}$,  $|g_i|^2$ follows an exponential distribution with rate parameter $1/M$ and \eqref{pgfl}  follows from the probability generating functional of a PPP \cite[Sec. 4.6]{HAN2}. From the CCDF, we can calculate the average harvested energy for a given path-loss value $d$ as follows
\vspace{-0.15cm}
\begin{align}
\bar{Q}_{\text{RE}}^\Phi(\pmb{1},d)=\int_{\zeta P_td^{-\alpha}}^{(M+1)\zeta P_td^{-\alpha}}F(x|d)dx. \label{Q_bar}
\end{align}
With expectation over the path-loss values, we have
\vspace{-0.12cm}
\begin{align}
&Q_{\text{FB}}(\xi, \lambda)=\mathbb{E}[\bar{Q}_{\text{RE}}^\Phi(\pmb{1},d)]=\int_{\mathcal{D}} \bar{Q}_{\text{RE}}^\Phi(\pmb{1},y)f_d(y)y dy \nonumber \\
&\;=\frac{2}{\rho^2-\xi^2}\int_{\xi}^{\rho} \int_{\zeta P_t y^{-\alpha}}^{(M+1)\zeta P_t y^{-\alpha}}F(x|y) y dx dy.
\end{align}

\vspace{-0.2cm}

\end{document}